\newtheorem{prop}{Proposition}
\theoremstyle{remark}
\newtheorem{remark}{Remark}
\numberwithin{equation}{section}
\begin{document}
\title[Real eigenvalue asymptotics]{Asymptotics of the real eigenvalue distribution \\for the real spherical ensemble}
\author{Peter J. Forrester}
\address{School of Mathematical and Statistics, ARC Centre of Excellence for Mathematical and Statistical Frontiers, The University of Melbourne, Victoria 3010, Australia}
\email{pjforr@unimelb.edu.au}
\date{}

\begin{abstract}
    The real Ginibre spherical ensemble consists of random matrices of the form $A B^{-1}$, where $A,B$ are independent standard real Gaussian $N \times N$ matrices. 
    The expected number of real eigenvalues is known to be of order $\sqrt{N}$. We consider the probability $p_{N.M}^{\rm r}$ that there are $M$ real eigenvalues in various regimes. These are when $M$ is proportional to $N$ (large deviations), when $N$ is proportional to $\sqrt{N}$ (intermediate deviations), and when $M$ is in the neighbourhood of the mean (local central limit theorem). This is done using a Coulomb gas formalism in the large deviations case, and by determining the leading asymptotic form of the generating function for the probabilities in the case of intermediate deviations (the local central limit regime was known from earlier work). Moreover a matching of the left tail asymptotics of the intermediate deviation regime with that of the right tail of the large deviation regime
    is exhibited,
    as is a matching of the right tail intermediate deviation regime with the leading order form of the probabilities in the local central limit regime.
    We also give the leading asymptotic form of $p_{N,0}^{\rm r}$, i.e.~the probability of no real eigenvalues.
\end{abstract}

\maketitle

\section{Introduction}
Consider a square random matrix with real, zero mean, unit standard deviation i.i.d~elements. Experiments using suitable computer software show that the expected number of real eigenvalues is of order $\sqrt{N}$, and that the probability $p_{N,M}^{\rm r}$ that there are $M$ real eigenvalues (since complex eigenvalues occur in complex conjugate pairs, we require $M$ to have the same parity as $N$) is nonzero. Such simple to observe effects were first quantified in the case of standard Gaussian
(also known as GinOE \cite{BF25}) random matrices in \cite{EKS94,Ed97}. Thus from \cite{EKS94} we have that asymptotically the expected number of real eigenvalues is $\sqrt{2N \over \pi}$, while \cite{Ed97} established the exact result $p_{N,N}^{\rm r} = 2^{-N(N-1)/4}$ for the probability that all eigenvalues are real. Later this first result was extended beyond the Gaussian case, being shown to remain valid at least for the circumstance that the first four moments of the distribution of the entries match a standard normal \cite{TV15}; see the informative introduction in \cite{Lu18} for more on this, and also for a listing of remaining outstanding questions relating to universality. In relation to the second result, the subsequent work
\cite{AK07} gave the asymptotic expansion of $p_{N,N-2}^{\rm r}/p_{N,N}^{\rm r}$ up to terms which go to zero in the limit.

Outside of the class of random matrices with real i.i.d.~elements there are now known to be several ``integrable'' (or exactly solvable) ensembles which permit a detailed analysis of the number or real eigenvalues. Specifically, in historical order and restricting attention  to the expected number, or exact or asymptotic results for individual $p_{N,M}^{\rm r}$, this has shown to be possible for the real spherical ensemble of random matrices of the form $A B^{-1}$, with $A,B$ independent GinOE matrices \cite{EKS94,FM11}; for the elliptic GinOE ensemble of random matrices $\sqrt{1+\tau} S + \sqrt{1 - \tau} A$, with $S$ ($A$) a symmetric (antisymmetric) standard Gaussian matrix
and $\tau \in [0,1)$ a parameter \cite{FN08p,Ta22,BKLL23,BL24,ABL25}; for truncations of Haar distributed real orthogonal matrices \cite{KSZ09,PS18}; for products of GinOE matrices
(Haar distributed real orthogonal matrices)
\cite{Fo14,Ku15,FI16,Si17a,AB24} (\cite{FK18,FIK20,LMS22}); for the ensemble of induced GinOE matrices \cite[\S 9.2]{BF25};
and most recently for asymmetric Wishart matrices \cite{BN25}.

Knowledge of certain $p_{N,M}^{\rm r}$ appear in applications. Thus in \cite{La13}  the problem in quantum information of
quantifying when two-qubits, chosen from a uniform distribution on the unit 3--sphere, 
 are an optimal pair was considered. The condition for the latter
 as  particular inequalities for certain weighted inner products between the qubits \cite{SRL11}. In \cite{La13} it was shown that the probability of this event is equal to the probability $p_{2,0}^{\rm r}$ for the ensemble of the product of two independent $2 \times 2$
GinOE matrices.
As the next example,
consider the tensor structure $\mathcal A = (a_{ijk}) \in
\mathbb R^{p \times p \times 2}$, represented as the column vector
${\rm vec } \, \mathcal A \in \mathbb R^{4 p^2}$. 
A problem of interest 
is to find matrices
$U = [ \vec{u}_1 \cdots \vec{u}_R] \in \mathbb R^{p \times R}$,
$V = [ \vec{v}_1 \cdots \vec{v}_R] \in \mathbb R^{p \times R}$,
$W = [ \vec{w}_1 \cdots \vec{w}_R] \in \mathbb R^{2 \times R}$, where $R$ referred to as the rank, such that the decomposition
$
{\rm vec} \, \mathcal A = \sum_{r=1}^R \vec{w}_r \otimes \vec{v}_r \otimes \vec{u}_r
$
holds for $R$ as small as possible \cite{KB09}. It is known
\cite{tB91} that if both $[a_{ij1}] =: X_1 \in \mathbb R^{p \times p}$ and
$[a_{ij2}] =: X_2 \in \mathbb R^{p \times p}$ are random matrices, with entries chosen from a continuous
distribution, then $R = p$ if all the eigenvalues of $X_1^{-1} X_2$ are real, and $R=p+1$ otherwise. Thus for $X_1,X_2$ GinOE matrices, the respective probabilities are $p_{N,N}^{\rm r}$ and $1 - p_{N,N}^{\rm r}$, with $p_{N,k}^{\rm r}$ now referring to the probability of $k$ real eigenvalues in the spherical ensemble. The most recent application, found in \cite{PS18}, shows that $p_{N,0}^{\rm r}$ for $N$ large
in the case of a single row and column truncation of Haar distributed real orthogonal matrix, relates to the persistance exponent for two-dimensional diffusion with random initial conditions --- for more on the asymptotics see \cite{GP19,FTZ22}.

Beyond the mean number of real eigenvalue for large $N$, a natural statistical question is their
fluctuation. In the case of the GinOE, the result that the variance for large $N$ is proportional to the mean was established in \cite{FN07}. In \cite{FM11} this was shown to remain true for the real spherical ensemble, and moreoever a local central limit theorem (local CLT) quantifying the probability distribution in the neighbourhood of its mean was deduced --- see too \cite[\S 9.3]{BF25} and \S \ref{S3.1} below. A closely related CLT, applying more generally to a broad class of (not necessarily smooth) linear statistics of the real eigenvalues, has been given in the case of the GinOE in \cite{Si17,FS23}. Results along this latter  line for the elliptic GinOE have been given in \cite{Fo24,BMS25}. By way of applications of fluctuation formulas for linear statistics, and distribution functions of the real eigenvalues generally for elliptic GinOE, one draws attention to the role they play in the study of complexity of random landscapes \cite{Fy16,BFK21}. 

The aim of the present paper is, for the real spherical ensemble, to specify asymptotic results for $p_{N,M}^{\rm r}$ beyond the regime of the validity of the local CLT.
Based on a products of gamma function evaluation in the case $k=N$ from \cite{FF11}, a result of this type has previously been obtained in \cite[Eq.~(29)]{BF11a},
where it was shown
\begin{equation}\label{(16b)}
\log p_{N,N}^{\rm r} = {N^2 \over 4} - {N^2 \over 2} \log 2 + {1 \over 12} \log N - {1 \over 12} -
\zeta'(-1) + {\rm O}\Big ( {1 \over N} \Big ).
\end{equation}
Our first result extends the O$(N^2)$ 
(i.e.~leading order  term herein) to specify the leading order  term for all $p_{N,M}^{\rm r}$ with
$M/N = \alpha$ ($0 < \alpha  \le 1$).

\begin{prop}\label{P1}
(Conditional on the validity of an electrostatics hypothesis.)
Let $\alpha=M/N$. For large $N$ and with $\alpha \ne 0$ fixed we have
\begin{multline}\label{(17x)}
\log p_{N,M}^{\rm r} \sim
- \frac{N^2}{8} \bigg (-2 \alpha +(\alpha-1) \Big ( (1-\alpha) \log (1 - \alpha)-(\alpha+1) \log (1+\alpha) \Big )\\ +2 \alpha (\alpha+1)
   \log (1 + \alpha) \bigg ).
\end{multline}
(Here the case $\alpha = 1$ is to be interpreted as the limit $\alpha \to 1^-$; note then agreement with the leading order in (\ref{(16b)}).)
\end{prop}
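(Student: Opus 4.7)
The natural strategy, in line with the Coulomb gas language advertised in the abstract, is as follows. Starting from the joint eigenvalue density for the real spherical ensemble derived in \cite{FM11}, the probability $p_{N,M}^{\rm r}$ can be written (up to an $M$-independent normalisation $Z_N$) as an integral
\begin{equation*}
p_{N,M}^{\rm r} \propto \int_{\mathbb R^M} d\vec\lambda \int_{\mathbb H^L} d\vec z\;|\Delta(\vec\lambda\cup\vec z\cup\bar{\vec z})|\,\prod_{j} w(\lambda_j)\prod_{k} w(z_k)\,w(\bar z_k),
\end{equation*}
where $L=(N-M)/2$, $\mathbb H$ is the open upper half-plane, $w(u)=(1+|u|^2)^{-(N+1)/2}$ is the spherical weight, and $\Delta$ denotes the Vandermonde product over all $N$ eigenvalues. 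I would then introduce the empirical measures $\mu_r = \tfrac{1}{N}\sum_j \delta_{\lambda_j}$ on $\mathbb R$ and $\mu_c = \tfrac{1}{N}\sum_k (\delta_{z_k}+\delta_{\bar z_k})$ on $\mathbb C$, of total masses $\alpha$ and $1-\alpha$, so that the integrand takes the Gibbs form $\exp(-N^2 \mathcal E[\mu_r,\mu_c])$ with
\begin{equation*}
\mathcal E[\mu_r,\mu_c] = -\iint \log|u-v|\,d\mu(u)\,d\mu(v) + \tfrac12 \int \log(1+|u|^2)\,d\mu(u),\qquad \mu := \mu_r+\mu_c.
\end{equation*}
A standard Laplace analysis then yields $\log p_{N,M}^{\rm r} \sim -N^2\bigl(\mathcal E[\mu_r^*,\mu_c^*]-\mathcal E^{\rm eq}\bigr)$, where $(\mu_r^*,\mu_c^*)$ is the minimiser under the mass and conjugation-symmetry constraints, and $\mathcal E^{\rm eq}$ is the unconstrained spherical equilibrium energy that will be absorbed into $Z_N$ and produce the $\alpha$-independent reference.

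Next I would pass to the $2$-sphere by stereographic projection $u \mapsto \hat u \in S^2$. Under this change of variables the external potential $\tfrac12 \log(1+|u|^2)$ combines with the single-particle piece of $\log|u-v|$ (and with the planar Jacobian) to convert the problem into a pure, constrained logarithmic equilibrium problem on $S^2$ with no external field; the real axis becomes the equator, so that $\mu_r^*$ is supported on the equator and $\mu_c^*$ on $S^2$ reflection-symmetric across it. Exploiting the axial rotational symmetry about the polar axis and the conjugation symmetry, the Euler-Lagrange equations for the minimiser should reduce to a one-dimensional problem for the latitude-dependent density of $\mu_c^*$ together with a uniform arc-length density for $\mu_r^*$ on the equator; the two mass constraints fix these densities. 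It is exactly at this step that the \emph{electrostatics hypothesis} enters: one assumes that the minimiser is realised by the natural axisymmetric saddle-point configuration, rather than by some phase-split or exotically supported measure.

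Finally, with the minimising pair in hand, the energy $\mathcal E[\mu_r^*,\mu_c^*]$ decomposes into three standard spherical logarithmic integrals (equator-equator, sphere-sphere, and equator-sphere), each of which evaluates in closed form in terms of elementary functions of $\alpha$; combining them and subtracting $\mathcal E^{\rm eq}$ should reproduce the right-hand side of \eqref{(17x)}. Consistency with the $\alpha = 1$ case, where $\mu_c^*$ disappears and only the equatorial density survives, would serve as a natural check against the leading $N^2$ terms in \eqref{(16b)}. The principal difficulty is the second step---not so much the functional form of the equilibrium measure once guessed from symmetry, but the proof that this guess is actually the global minimiser over all admissible $(\mu_r,\mu_c)$; this is precisely the unproven input encoded in the electrostatics hypothesis, and I expect it to be the main obstacle.
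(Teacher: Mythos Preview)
Your approach is essentially the paper's: Coulomb gas on the sphere via stereographic projection, then minimisation of the logarithmic energy under the mass constraints $|\mu_r|=\alpha$, $|\mu_c|=1-\alpha$ with reflection symmetry, followed by a three-term decomposition of the resulting energy. Two points of execution are worth flagging. First, the paper does not leave $\mu_c^*$ as an unspecified ``latitude-dependent density'' to be extracted from the Euler--Lagrange equations; the key structural observation is that the minimising complex-eigenvalue measure is the \emph{uniform} surface measure (density $1/\pi$ on the sphere of radius $1/2$) supported on two polar caps $\theta\in[0,\theta_0]\cup[\pi-\theta_0,\pi]$, with the cap angle fixed by the mass constraint as $\cos\theta_0=\alpha$. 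This is precisely what makes the spherical case tractable where the planar GinOE case of \cite{GPTW16} was not. Second, rather than computing the energy of $(\mu_r^*,\mu_c^*)$ and subtracting $\mathcal E^{\rm eq}$, the paper superimposes a uniform neutralising background of density $-1/\pi$ from the outset; since the caps then carry zero net charge, the whole calculation reduces to the self- and interaction energies of the equatorial line charge and the negatively charged spherical annulus $\theta\in(\theta_0,\pi-\theta_0)$, which are elementary.

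A smaller remark: you have slightly misplaced the ``electrostatics hypothesis''. In the paper it is the passage from the exact conditional eigenvalue PDF to the leading-order Coulomb energy (your ``standard Laplace analysis'') that is taken on faith, by analogy with the rigorous GinOE result of \cite{GPTW16}; the identification of the minimiser is then argued directly from symmetry and screening, not assumed. Also, since $|\Delta|$ appears to the first power for the real ensemble, your energy functional should carry a prefactor $\tfrac12$ on the double logarithmic integral; this is harmless for the outline but would need correcting in any actual computation.
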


This is established in Section \ref{S2}. We then proceed in Section \ref{S3} to analyse the leading order asymptotics of $\log p_{N,M}^{\rm r}$ for $M$ proportional to 
$\sqrt{N}$, which is the intermediate deviation regime. The analysis here relies on determining the leading asymptotics of the generating function for the probabilities $\{p_{N,M}^{\rm r} \}$, the applying steepest descents to the corresponding contour integral formula. Matching of the left and right tails with the asymptotic form in the right tail of the large deviation result, and for the known local central limit regime
result respectively, is exhibited. Also obtained is the leading order asymptotics of
the probability of no real eigenvalues $p_{N,0}^{\rm r}$.

\section{Electrostatic energy and proof of Proposition \ref{P1}}\label{S2}
\subsection{An hypothesis linking $p_{N,M}^{\rm r}$ and an electrostatic energy}

To appreciate the inter-relation of the heading of this subsection, it is helpful to first revise the situation with the complex (rather than real) spherical ensemble \cite[\S 15.6]{Fo10}. This consists of random matrices $A B^{-1}$, where $A,B$ are independent GinUE (standard complex Gaussian) random matrices. From \cite{Kr09} the corresponding eigenvalue PDF is proportional to 
\begin{equation}\label{(1E)}
\prod_{l=1}^N {1 \over (1 + | z_l|^2)^{N+1}}
\prod_{1 \le j < k \le N} |z_k - z_j|^2.
\end{equation}
Applying the stereographic projection $z_l = e^{i \phi_l} \sin {\theta_l \over 2}$ from a point on the plane, to a point $\vec{r}_l$ on the sphere of radius $1/2$ in $\mathbb R^3$ centred at the origin with polar, azimuthal angle pair $(\theta_l, \phi_l)$, transforms (\ref{(1E)}) to the functional form
\begin{equation}\label{(2E)}
\prod_{1 \le j < k \le N} ||\vec{r}_k - \vec{r}_j||^2.
\end{equation}
The functional form (\ref{(2E)}) is, up to proportionality, the Boltzmann factor for the two-dimensional one-component plasma (Coulomb gas) on the sphere at inverse temperature $\beta = 2$.
The exact solution of this model in the sense of the free energy and correlation functions was given in \cite{Ca81}. Very recently various integrable properties have been shown to persist in the presence of particular external potentials, typically related to point insertions;
see \cite{BKSY25}, \cite{BFKL25}, \cite{BFL25}.

The exact functional form of the eigenvalue PDF for the real spherical ensemble has been calculated in \cite{FM11}. This functional form is conditional on the number of real eigenvalues. An approximate identification with a conditioned Coulomb gas, placing charges on the equator to correspond to the real eigenvalues, and charges in the upper (lower) hemisphere to correspond to the complex eigenvalues in the (upper) lower half complex plane, is given in \cite{Fo16}.
Since the complex eigenvalues occur in complex conjugate pairs, the charges in the lower half sphere are the mirror image of those in the upper half plane. The correspondence is not exact as there are additional one-body terms which require expanding at large distances before the identification. 

This is similarly true in relating the conditional eigenvalue PDF for GinOE to a Coulomb gas system \cite{Fo16}. Notwithstanding this complication, it was shown in \cite{GPTW16} in the case of GinOE that the approximate Coulomb gas analogy is exact at leading order with respect to the asymptotic form of the configuration integral determining the probabilities $p_{N,M}^{\rm r}$. Consequently an expression for 
\begin{equation}\label{ST}
 \lim_{N \to \infty}\bigg (- {1 \over N^2}
\log p_{N,M} \bigg ) \bigg |_{M/N = \alpha}
\end{equation}
in terms of the electrostatic energy of a minimising conditioned continuous charge configuration was given. The requirement of the latter was that there be a charge density on the segment $[-1,1]$ of the real line, total charge $\alpha$, and a charge density in the unit disk, symmetrical about the real axis of total charge $1-\alpha$.
However, apart from the case $\alpha = 0$, it was not possible to determine this energy and thus evaluate (\ref{ST}), in keeping with the absence of knowledge of the minimising charge densities. 

The starting point of a our study in relation to the asymptotics of $ p_{N,M}  |_{M/N = \alpha}$ for the real spherical ensemble is to hypothesize the validity of the analogous electrostatic formulation of (\ref{ST}) as demonstrated in \cite{GPTW16} in the case of GinOE. To attempt to give a proof would change the emphasis of this work from exact asymptotic formulas to the technical details associated with large deviations analysis, which we want to avoid\footnote{Historical precedence for following this route can be found in the pioneering work of Dyson
\cite{Dy62e} in relation to the computation of probability of a gap of length $\alpha$ in the bulk scaled circular $\beta$ ensemble.}.
Specifically, by analogy with the findings of \cite{GPTW16}, we hypothesize (\ref{ST}) as equal to the minimum energy, $E_\alpha$ say, 
of a specific electrostatics problem.
Thus,
with respect to the Coulomb pair potential on the sphere, $E_\alpha$ is the minumum possible electrostatic
energy resulting from distributing charge such that the total charge on the equator is $\alpha/\pi$, and the charge distribution on the body of the sphere is symmetrical about the equator of total charge density $(1-\alpha)/ \pi$
(the factors of $1/\pi$ come from the fact that for a sphere of radius $1/2$ the equator has length $\pi$, and the area of the sphere also equals $\pi$).
As a normalisation giving $E_\alpha = 0$ when $\alpha = 0$, it is supposed too that there is a  neutralising background charge density $-1/\pi$, uniform on the sphere. 

In distinction to the case of the GinOE, the charge distribution for this minimisation is simple to identify. 
First, by the rotational symmetry of the problem about the polar axis, the charge density both on the equator and the body of the sphere must be independent of the azimuthal angle $\phi$.
The same symmetry and the repulsion of the Coulomb potential between like charges implies that the support of the positive charge density on the body of the sphere consists of a spherical cap about the north (south) pole with polar angle $\theta \in [0,\theta_0]$ ($\theta \in [\pi - \theta_0, \pi]$). 

Furthermore, so that there is no electrtic field produced by this charge density when superimposed on the uniform negative background charge density, the charge density in each spherical cap must equal $1/\pi$. On the other hand, the total charge in a spherical cap is $(1-\alpha)/2$, and so taking the one about the north pole for definiteness (let its area be denoted $A_{[0,\theta_0]}$) we have
\begin{equation}\label{(1)}
{1 - \alpha \over 2} = {1 \over \pi} A_{[0,\theta_0]}.
\end{equation}
Now, for a sphere of radius ${1 / 2}$ we know the area differential is $(1 / 4) \sin \theta d \theta d \phi$; see e.g.~\cite[proof of Prop.~2.1]{FF11}. Hence
\begin{equation}\label{(2)}
A_{[0,\theta_0]} = {\pi \over 2} \int_0^{\theta_0}
\sin \theta \, d \theta = {\pi \over 2} (1 - \cos \theta_0) = \pi \sin^2 {\theta_0 \over 2} =
\pi \Big ( 1 - \cos^2 {\theta_0 \over 2} \Big ).
\end{equation}
Substituting the second equality in (\ref{(1)}) gives
\begin{equation}\label{(3)}
 \cos \theta_0 = \alpha.
\end{equation}
% We similarly require that about the spherical cap about the south pole, ending at $\theta = \pi - \theta_0$, there is
% a neutralising uniform charge density $+ {N \over \pi}$, such that the total charge in this cap is equal to ${N - M \over 2}$. Note that this requirement is consistent with the formula (\ref{(3)}) for $\theta_0$. 

\subsection{Computation of the electrostatic energy}
We are now in a position to carry out the computation required to deduce (\ref{(17x)}) of Proposition \ref{P1}.
Due to the neutralising spherical caps 
about the north and south poles, the equivalent charge distribution is a uniform negative background charge density $- 1 / \pi$ in the spherical annulus about the equator with polar angles in the interval $(\theta_0, \pi - \theta_0)$ (total charge in this region is $-\alpha$), and the uniform positive charge density ${\alpha/ \pi}$ on the equator. For  the total electrostatic energy we have
\begin{equation}\label{(4)}
E_{\alpha} = E^{\rm a} +
E^{\rm e} + E^{\rm e/a},
\end{equation}
(here $E^{\rm e/a}$ is the electrostatic energy between the charge distribution on the equator and the spherical annulus, while $E^{\rm a},E^{\rm e}$ are self explanatory). 

Let $V_{\rm a}$ denote the potential energy due a point charge in the annulus interacting with the smeared out uniform negative background in the annulus. Proceeding as in \cite{FF11}, we compute this according to
\begin{equation}\label{(4.1)}
V_{\rm a} = V_{\rm a/s} - V_{\rm a/caps},
\end{equation}
where $V_{\rm a/s}$ ($V_{\rm a/caps}$)
is the potential energy due to a point charge in the annulus interacting with the smeared out uniform negative background
over the entire sphere (the two spherical caps).

From \cite[Eq.~(2.10) with $Q=q=0$]{FF11}, the potential energy due to a point positive unit charge anywhere on the sphere and the smeared out negative background is $-{1 \over 2}$, and thus
\begin{equation}\label{(4.2)}
V_{\rm a/s} =  -{1 \over 2}.
\end{equation}

% The charge density of the smeared out background is $- {N \over \pi}$. We have 
% \begin{equation}\label{(5)}
% E^{\rm sphere} = {1 \over 2} V_s \Big ( - {N \over \pi} \Big ) \pi{N - M \over N} = {N(N-M) \over 4}.
% \end{equation}
% Here the first factor of ${1 \over 2}$ is due to the double counting, while the last factor of $\pi {N - M \over N}$ is the area of the annulus as calculated in
% (\ref{(2)}) and (\ref{(3)}).
% The value (\ref{(5)}) is the first term in \cite[Eq.~(2.18)]{FF11}.

From \cite[Eq.~(2.13)]{FF11}, the potential $V_{a/\rm Ncap}$ due to a point positive unit charge at polar angle $\theta'$ in the spherical annulus about the equator, and the neutralising uniform negative background charge in the spherical cap about the north pole is
\begin{equation}\label{(6)}
V_{{\rm a/Ncap}} = 
 {1 \over \pi} {\pi \over 2}
\int_0^{\theta_0} \sin \theta \Big (
\log \cos {\theta \over 2} +
\log \sin {\theta' \over 2} \Big ) \, d \theta.
\end{equation}
Here $ {1 / \pi}$ is the (positive) neutralising charge density in the spherical cap, while $\pi / 2$ is the constant factor in the Jacobian coming from integrating  $(1/4) \sin \phi$ (with $R= {1 / 2}$) over the azimuthal angle $\phi$. The factor of $\sin \theta$ in the integral is part of the Jacobian, while the factor 
$\Big (
\log \cos {\theta \over 2} +
\log \sin {\theta' \over 2} \Big )$ comes from a rewrite of minus the Coulomb potential on the sphere between unit charges at $(\theta,\phi)$
and $(\theta',\phi')$, integrated over $\phi$ with the assumption that $\theta'$ is in the spherical annulus.

Making use of the (\ref{(3)}) we compute
$$
\int_0^{\theta_0} \sin \theta \, d \theta = 1 - \cos \theta_0 = 1 - \alpha.
$$
Also
\begin{multline}
\int_0^{\theta_0} \sin \theta \Big (
\log \cos {\theta \over 2} \Big ) \, d \theta = - \sin^2 {\theta_0 \over 2} - 
\cos^2 {\theta_0 \over 2} \log \cos^2 {\theta_0 \over 2} \\
= - {1 \over 2} \Big ( 1 - \alpha \Big ) -
{1 \over 2} \Big ( 1 + \alpha \Big )
\log {1 \over 2} \Big ( 1 + \alpha \Big ),
\end{multline}
where the first equality follows by relating it to the anti-derivative of $x \log x$, and the second makes use of formulas implied by 
(\ref{(2)}) and (\ref{(3)}).
Hence
\begin{multline}\label{(8)}
V_{{\rm a/Ncap}} =  {1 \over 2} \bigg (
- {1 \over 2}  ( 1 - \alpha  )-{1 \over 2}  ( 1 + \alpha  )
\log {1 \over 2}  ( 1 + \alpha  )\\
+  \Big ( \log \sin {\theta' \over 2} \Big )
 ( 1 - \alpha  ) \bigg ).
\end{multline}
In relation to 
the potential $V_{a/\rm Scap}$ due to a point positive unit charge at polar angle $\theta'$ in the spherical annulus about the equator, and the  uniform negative background charge in the spherical cap about the south pole,
we simply replace $\theta'$ in (\ref{(8)}) by $\pi - \theta'$ to get
\begin{multline}\label{(9)}
V_{{\rm a/Scap}} =  {1 \over 2} \bigg (
- {1 \over 2}  ( 1 - \alpha  )-{1 \over 2} ( 1 + \alpha  )
\log {1 \over 2}  ( 1 + \alpha  )\\
+ \Big ( \log \cos {\theta' \over 2} \Big )
 ( 1 - \alpha  ) \bigg ).
\end{multline}

Substituting the sum of (\ref{(8)}) and
(\ref{(9)}) for $V_{\rm a/caps}$ in
(\ref{(4.1)}), and substituting too  (\ref{(4.2)}) we have
\begin{multline}\label{(9.1)}
V_{{\rm a}} = -{1 \over 2}   -{1 \over 2}
\bigg ( -   ( 1 - \alpha  )-  ( 1 + \alpha  ) 
\log {1 \over 2}  ( 1 + \alpha  ) \\
+ \Big ( \log \cos {\theta' \over 2}  + \log \sin {\theta' \over 2} \Big  )
 ( 1 - \alpha  ) \bigg ) \\
 =  -{1 \over 2}
\bigg (  \alpha -  ( 1 + \alpha ) 
\log {1 \over 2}  ( 1 + \alpha  ) 
+ \Big ( \log \cos {\theta' \over 2}  + \log \sin {\theta' \over 2} \Big  )
 ( 1 - \alpha  ) \bigg ).
\end{multline}
The task now is to compute
\begin{equation}\label{(10)}
E^{\rm a} =
 {1 \over 2} \Big ( - {1 \over \pi} \Big ) {\pi \over 2} 
\int_{\theta_0}^{\pi - \theta_0}  
V_{{\rm a}} \sin \theta  \, d \theta.
\end{equation}
The first factor of ${1 / 2}$ comes from the double counting from the implied double integral. Also,
similar to (\ref{(6)}), the factor
$ -{1 / \pi}$ is the (negative) background charge density in the spherical cap, while the factor ${\pi / 2}$ is the constant factor in the Jacobian coming from integrating  $(1/4) d \phi$  over the azimuthal angle $\phi$. The factor of $\sin \theta$ in the integral is part of the Jacobian.

To compute (\ref{(10)}), we require the integrals
$$
 \int_{\theta_0}^{\pi - \theta_0} \sin \theta  \, d \theta =
 - \Big ( \cos(\pi - \theta_0) - \cos \theta_0 \Big )= 2 \cos \theta_0 = 2 \alpha
$$
and
\begin{multline*}
 \int_{\theta_0}^{\pi - \theta_0} \sin \theta 
 \Big ( \log \sin {\theta \over 2} +
 \log \cos {\theta \over 2} \Big )
 \, d \theta =
 \int_{\theta_0}^{\pi - \theta_0} \sin \theta  \log \Big ( {1 \over 2} \sin \theta \Big ) \, d \theta \\ = 
 2 \int_{\theta_0}^{\pi/2} \sin \theta  \log \Big ( {1 \over 2} \sin \theta \Big )
 \, d \theta
= 2 \bigg (
 \log \cot {\theta_0 \over 2} +
 \cos \theta_0 \Big ( -1 + \log {\sin \theta_0 \over 2} \Big ) \bigg ) \\
 = 2 \bigg \{ 
 {1 \over 2} \log {1 + \alpha \over 1 - \alpha} +
 \alpha \bigg ( - 1 +
 \log \Big ( {1 \over 2}  ( 1 - {\alpha}^2  )^{1/2} \Big ) \bigg ) \bigg \}.
 \end{multline*}
 Hence
 \begin{multline}\label{(10a)}
E^{\rm a} =  {1 \over 4} \bigg [
 \bigg (
   \alpha - \Big ( 1 + \alpha \Big )
\log {1 \over 2} \Big ( 1 + \alpha \Big ) \bigg ) \alpha \\
+  ( 1 - \alpha  )  
\bigg \{ 
 {1 \over 2} \log {1 + \alpha \over 1 - \alpha} +
 \alpha \bigg ( - 1 +
 \log \bigg ( {1 \over 2} \Big ( 1 - \alpha^2 \Big )^{1/2} \Big ) \bigg ) \bigg \} \bigg ].
 \end{multline}

 We consider next the computation of $E^{\rm equator}$. For particles on the equator separated by azimuthal angle $\phi$, we use the fact that the pair potential is $-\log \sin \phi/2$. Hence the potential energy of a unit positive point charge on the equator, and a positive continuous charge density
 ${\alpha / \pi}$ on the equator is
 \begin{equation}\label{(13)}
 V_{\rm e} = - {\alpha \over \pi} {1 \over 2}
 \int_0^{2 \pi} \log \sin \phi/2 \, d \phi = - {\alpha \over \pi} {1 \over 2} 
 \Big ( -2 \pi \log 2 \Big ) = \alpha \log 2.
 \end{equation}
Here, since the radius of the equator is $R= {1 / 2}$, we have used the fact that the length differential is $(1 / 2) d \phi$. From this we calculate
 \begin{equation}\label{(14)}
 E^{\rm e} = {1 \over 2} {\alpha \over \pi} {1 \over 2}
 \int_0^{2 \pi}  V_{\rm e} \, d \phi
 = {\alpha^2 \over 2} \log 2.
\end{equation} 

It remains to compute $E^{\rm e/a}$. We have
\begin{equation}\label{(15)}
 E^{\rm e/a} = {\alpha \over \pi} {1 \over 2}
 \int_0^{2 \pi}  V_{\rm a} \Big |_{\theta = \pi/2} \, d \phi .
 \end{equation}
The factor ${\alpha \over \pi}$  is the charge density on the equator of the sphere (which is positive) and ${1 \over 2} d \phi$ is the length differential.
Making use of (\ref{(9.1)}) we therefore have
\begin{equation}\label{(15a)}
 E^{\rm e/a} = \alpha V_{\rm a} \Big |_{\theta = \pi/2} =
 -{\alpha \over 2}
\bigg (  \alpha -  ( 1 + \alpha  ) 
\log {1 \over 2}  ( 1 + \alpha  ) 
+   ( 1 - \alpha  ) \log  {1 \over 2} 
 \bigg ).
\end{equation}
Combining with (\ref{(14)}) then gives
\begin{equation}\label{(16)}
E^{\rm e}+ E^{\rm e/a} = 
{\alpha \over 2} \bigg ( - \alpha (1 + \log 2 )
+ (1+\alpha) \log (1 + \alpha) \bigg ).
\end{equation}
Adding this to (\ref{(10a)}) according to (\ref{(4)}) and recalling our fundamental hypothesis that (\ref{ST}) is equal to $E_\alpha$
we obtain the stated result (\ref{(17x)}).
\hfill $\square$
%  \begin{multline}\label{(15a)}
% E_{N,M} = \frac{1}{8} \bigg (-2 M N+(M-N) \Big ( (N-M) \log {(N-M) \over N}-(M+N) \log {(M+N) \over N}\Big )\\ +2 M (M+N)
%    \log \frac{(M+N)}{N}\bigg ).
% \end{multline} 
% Setting $\alpha = M/N$ this reads
%  \begin{multline}\label{(15a+)}
% E_{N,M} = \frac{N^2}{8} \bigg (-2 \alpha +(\alpha-1) \Big ( (1-\alpha) \log (1 - \alpha)-(\alpha+1) \log (1+\alpha) \Big )\\ +2 \alpha (\alpha+1)
%    \log (1 + \alpha) \bigg )
% \end{multline}

\begin{remark}\label{R1}
It follows from (\ref{(17x)}) that for small $\alpha$ (which is the neighbourhood of the minimum of $E_{\alpha}$ as a function of $\alpha$ for $1 > \alpha \ge 0$),
\begin{equation}\label{15ab}
E_{\alpha} \sim \frac{N^2}{8} \bigg ( \frac{2 \alpha^3}{3}+\frac{\alpha^5}{15}+{\rm O}\left(\alpha^7\right) \bigg ).
\end{equation}
This leading cubic vanishing with respect to $\alpha$ gives that $E_\alpha$ has order unity behaviour for $\alpha$ of order $N^{-2/3}$. On the other hand, as to be revised in \S \ref{S3.1} below, the region of validity of a local CLT is of
order $N^{-3/4}$. Such a discrepancy, first noted in the context eigenvalue fluctuations in a centered disk region for the Ginibre ensemble of standard complex Gaussian matrices in \cite{L+19},  was identified in this latter work as an indicator of an intermediate regime linking the region of large fluctuations to that of the local CLT. Such an intermediate regime in the present context is discussed in \S \ref{S3.2} below.

% The significance of such a cubic form, in relation to an intermediate scaling regime between that of large deviations, and the regime of the validity of a central limit theorem, is discussed in \cite{L+19}.
\end{remark}

\subsection{Comparison with exact numerics}

% Consider the spherical ensemble of real Gaussian random matrices, formed out of matrices $AB^{-1}$, where $A,B$ are independent $N \times N$ GinOE matrices; see e.g.~\cite{BF25}. Let $p_{N,M}^{\rm r}$ denote the probability that there are $M$ real eigenvalues (for this to be nonzero it is required that $M$ have the same parity as $N$). The large deviation calculation of \cite{GPTW16} gives that to leading order in $N$, and with $M/N$ fixed ($M,N$ having the same parity)
% \begin{equation}\label{(17)}
% \log p_{N,M}^{\rm r} \sim - E_{N,M}.
% \end{equation}

% This can be checked for $M=N$. Then (\ref{(15a)}) simplifies to
% \begin{equation}\label{(16a)}
% E_{N,N} = - {N^2 \over 4} + {N^2 \over 2} \log 2,
% \end{equation}
% which substituted in (\ref{(17)}) implies
% \begin{equation}\label{(17)}
% \log p_{N,M}^{\rm r} \sim {N^2 \over 4} - {N^2 \over 2} \log 2.
% \end{equation}
% On the other hand, the formula in \cite{FM11} for
% $p_{N,M}^{\rm r}$, general $M$, allows for $M=N$ the computation of the large $N$ asymptotic formula
%  \cite[Eq.~(29)]{BF11a}
% \begin{equation}\label{(16b)}
% \log p_{N,N}^{\rm r} = {N^2 \over 4} - {N^2 \over 2} \log 2 + {1 \over 12} \log N - {1 \over 12} -
% \zeta'(-1) + {\rm O}\Big ( {1 \over N} \Big ).
% \end{equation}

The exact formula in \cite{FM11} for
$p_{N,M}^{\rm r}$, general $M$, referred to above relates to the corresponding generating function
\begin{eqnarray}\label{3'}
 Z_N(\xi)=\sum_{M=0}^{N}{}^* \, \xi^M p_{N,M}^{\rm r},
\end{eqnarray}
where the asterisk indicates that the sum over $M$ is restricted to values with the same parity as $N$.
Taking $N$ even for convenience, we have from \cite{FM11} the product evaluation
\begin{eqnarray}\label{3a}
\nonumber Z_N(\xi)&=&\frac{(-1)^{(N/2)(N/2-1)/2}}{2^{N(N-1)/2}}\Gamma((N+1)/2)^{N/2}\Gamma(N/2+1)^{N/2}\\
\label{eqn:Z_N} &&\times\prod_{s=1}^{N}\frac{1}{\Gamma(s/2)^2}\prod_{l=0}^{N/2-1}(\xi^2 \alpha_{l} + \beta_{l}),
\end{eqnarray}
where
\begin{eqnarray}\label{3'a}
\nonumber \alpha_{l}&=&\frac{2\pi}{N-1-4l}\frac{\Gamma((N+1)/2)}{\Gamma(N/2+1)},\\
\beta_{l}&=&\frac{2\sqrt{\pi}}{N-1-4l}\left( 2^N\frac{\Gamma(2l+1)\Gamma(N-2l)}{\Gamma(N+1)}-\sqrt{\pi}\frac{\Gamma((N+1)/2)}{\Gamma(N/2+1)}\right).
\end{eqnarray}
For numerical results for a specific $N$, computer algebra can be used to expand (\ref{3a}) in the form of (\ref{3'}) to thus read off the probabilities $\{p_{N,M}^{\rm r} \}_{M=0,2,\dots,N}$. This allows for a check on (\ref{(17x)}). For example, using (\ref{3a}) we calculate
$$
 p_{60,30}^{\rm r} = 3.562969809273*10^{-17} \quad \implies \log p_{60,30}^{\rm r} = -37.87335217\cdots
$$
while (\ref{(17x)}) gives
$$
\log p_{60,30}^{\rm r} \approx - 38.5125\cdots
$$

\section{Intermediate deviation and probability of no real eigenvalues}\label{S3}

\subsection{Region about the mean}\label{S3.1}
A feature of the generating function $Z_N(\alpha)$ (\ref{3a}) in factorised form is that all the zeros as a function of $\xi^2$ are negative and real.
As noted in \cite{FM11}, according to \cite{Be73}, this is a sufficient condition that the probabilities $\{ p_{N,M}^{\rm r} \}$ satisfy a local CLT about the mean
\begin{equation}\label{M}
 \langle M \rangle := \mu_N \sim \sqrt{\pi N/2}.   
\end{equation}
Thus, with knowledge too of the variance
$\sigma_N^2 := \langle M^2 \rangle - \langle M \rangle^2 \sim (2 - \sqrt{2}) \mu_N$ also computed in \cite{FM11}, we read off from \cite[Prop.~3.5]{FM11}
the asymptotic form for $M$ in the neighbourhood of the mean,
\begin{equation}\label{14a}
p_{N,M}^{\rm r} \Big |_{M/N = \alpha} \sim {1 \over
\sqrt{\pi c N^{1/2}}} e^{-N^{3/2} ( \alpha - \mu_N/N)^2/c}, \quad c:= \sqrt{2 \pi} (2 - \sqrt{2}).
\end{equation}
Note that for the exponent in (\ref{14a}) to be of order unity as required for the regime of the validity of the local CLT, it is required that
\begin{equation}\label{14b}
\alpha - \mu_N/N \sim {\rm O}(N^{-3/4}).
\end{equation}
With this condition, a numerical check on the functional form (\ref{14a}) carried out using 
(\ref{3a}) has been displayed in \cite[Fig.~2]{FF11}.

As already pointed out in Remark \ref{R1}, the exponent $-3/4$ in (\ref{14b}) is distinct from the exponent $-2/3$ associated with the right tail of the large deviation asymptotic form. Also noted in Remark \ref{R1} was the identification of such a circumstance in a seemingly distinct problem in random matrix theory \cite{L+19}, where it was shown to be a signature of an intermediate deviation regime. Moreover a strategy to analyse the latter was devised in the work \cite{L+19}, which we will show suffices for the present problem in the next subsection.

\subsection{Intermediate deviation}\label{S3.2}
According to (\ref{3'})
\begin{equation}\label{2.23a}
p_{N,M}^{\rm r} = [\xi^M] Z_N(\xi) = 
{1 \over 2 \pi i } \oint
{Z_N(\xi) \over \xi^{M+1}} \, d \xi.
\end{equation}
Here $[\xi^M]$ denotes the coefficient of $\xi^M$ in the following function, and the integral in the contour is required to be simple, closed, and to encircle the origin.

 The leading large $N$ asymptotic form of $Z_N(\xi)$ can be calculated, in the situation that $\xi$ is fixed.

 \begin{prop}\label{P2}
For large $N$ we have
\begin{equation}\label{px2}
\log Z_N(\xi) \sim { \sqrt{N \over 2}}
\int_0^\infty \log \Big (
1- (1-\xi^2) e^{-  t^2} \Big ) \, dt.
\end{equation}
 \end{prop}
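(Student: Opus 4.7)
The plan is to start from the product formula (\ref{3a}) for $Z_N(\xi)$ and exploit the normalisation $Z_N(1)=1$ (which holds since probabilities sum to unity). This cancels all $\xi$-independent prefactors, leaving the single sum
\begin{equation*}
\log Z_N(\xi) \;=\; \sum_{l=0}^{N/2-1} \log\!\left(1+(\xi^2-1)\,r_l\right), \qquad r_l := \frac{\alpha_l}{\alpha_l+\beta_l}.
\end{equation*}
A direct computation from (\ref{3'a}) produces the simplification
\begin{equation*}
\alpha_l+\beta_l \;=\; \frac{2\sqrt{\pi}\cdot 2^N}{N-1-4l}\cdot\frac{\Gamma(2l+1)\Gamma(N-2l)}{\Gamma(N+1)},
\end{equation*}
the two terms proportional to $\Gamma((N+1)/2)/\Gamma(N/2+1)$ cancelling, and hence
\begin{equation*}
r_l \;=\; \frac{\sqrt{\pi}\,\Gamma((N+1)/2)\,\Gamma(N+1)}{2^N\,\Gamma(N/2+1)\,\Gamma(2l+1)\,\Gamma(N-2l)}.
\end{equation*}

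Now set $s=2l$ and $u=s-N/2$. Stirling's formula delivers $\sqrt{\pi}\,\Gamma((N+1)/2)/\Gamma(N/2+1)\sim\sqrt{2\pi/N}$, and the standard Gaussian approximation $\binom{N}{s}(N-s)\sim 2^N\sqrt{2/(\pi N)}\,(N/2-u)\,e^{-2u^2/N}$ then produces, uniformly on $|u|={\rm O}(\sqrt{N})$,
\begin{equation*}
r_l \;\sim\; \left(1-\frac{2u}{N}\right)e^{-2u^2/N} \;=\; e^{-2u^2/N}\bigl(1+{\rm O}(N^{-1/2})\bigr).
\end{equation*}
Since $l$ advances in unit steps while $u$ advances in steps of $2$, the rescaling $u=\sqrt{N/2}\,t$ converts the sum to a Riemann sum with mesh $2\sqrt{2/N}$ in $t$, producing the overall factor $\tfrac{1}{2}\sqrt{N/2}$:
\begin{equation*}
\sum_{l=0}^{N/2-1}\log\!\left(1+(\xi^2-1)e^{-2u^2/N}\right) \;\sim\; \frac{\sqrt{N/2}}{2}\int_{-\infty}^{\infty}\log\!\left(1-(1-\xi^2)e^{-t^2}\right)dt,
\end{equation*}
which on using parity in $t$ is the stated formula (\ref{px2}).

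The main technical obstacle is the uniform justification of the Riemann sum step. Inside the Gaussian-dominant region $|u|\lesssim\sqrt{N\log N}$, the multiplicative $(1-2u/N)$ correction and the next-order Stirling remainder are each ${\rm O}(N^{-1/2})$ pointwise, and when summed against the $\sqrt{N}$ relevant terms yield only ${\rm O}(1)$, which is ${\rm o}(\sqrt{N})$. In the tails, where the Gaussian approximation to $\binom{N}{2l}$ degrades, $r_l$ is itself super-polynomially small in $N$, so a crude bound suffices to show those terms contribute negligibly. A secondary point is to identify an open domain of $\xi$ --- say those with $|1-\xi^2|<1$ --- on which the integrand stays off the branch cut of the logarithm and the asymptotic holds; this is precisely the range that will be used in the steepest-descent analysis of the contour integral (\ref{2.23a}) in the next subsection.
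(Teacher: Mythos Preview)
Your proof is correct and follows essentially the same route as the paper: both exploit the normalisation $Z_N(1)=1$ to cancel the prefactors in (\ref{3a}), identify $\alpha_l+\beta_l$ with the paper's $\beta_l'$, apply Stirling and the de Moivre--Laplace approximation to obtain $r_l\sim e^{-2u^2/N}$, and then pass to a Riemann sum. Your write-up is in fact slightly more careful than the paper's about the $(1-2u/N)$ correction, the tail contributions, and the domain of $\xi$ on which the asymptotic is claimed.
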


\begin{proof}
We notice from   (\ref{3'}) that
$$
\beta_l = \beta_l' - \alpha_l, \qquad\beta_l' =
\frac{2\sqrt{\pi}}{N-1-4l}2^N\frac{\Gamma(2l+1)\Gamma(N-2l)}{\Gamma(N+1)}.
$$
Furthermore, since $Z_N(\xi) |_{\xi = 1} =1$, and the fact that $Z_N(\xi) |_{\xi = 1}$ is given by   (\ref{3a}) with the factors $\xi^2 \alpha_l+ \beta_l$ replaced by $\beta_l'$, a possible rewrite of (\ref{3a}) is that
\begin{multline}\label{3av}
 Z_N(\xi) =  \prod_{l=0}^{N/2-1}\Big (1 - (1-\xi^2) {\alpha_{l} \over \beta_l'}  \Big ) \\=
 \prod_{l=0}^{N/2 - 1} \bigg (1 - (1-\xi^2)
\frac{\sqrt{\pi}}{2^N}
\frac{\Gamma(N+1)}{\Gamma(2l+1)\Gamma(N-2l)}
{ \Gamma((N+1)/2) \over \Gamma(N/2+1) } \bigg ).
\end{multline}

For large $N$, $\Gamma((N+1)/2)/\Gamma(N/2+1) \sim N^{-1/2}$ while
$$
\frac{\sqrt{\pi}}{2^N}
\frac{\Gamma(N+1)}{\Gamma(2l+1)\Gamma(N-2l)} =
\frac{\sqrt{\pi}}{2^N} \binom{N}{2l} (N-2l) \sim
2 e^{-(N-4l)^2/2N} (1 - 2l/N), 
$$
with the asymptotic form (which is valid for $(N - 4l) = {\rm o}(N^{2/3})$) following from the well documented local CLT as applied to the binomial coefficients; see e.g.~\cite{WiB}.
Hence
$$
\bigg (1 - (1-\xi^2)
\frac{\sqrt{\pi}}{2^N}
\frac{\Gamma(N+1)}{\Gamma(2l+1)\Gamma(N-2l)}
{ \Gamma((N+1)/2) \over \Gamma(N/2+1) } \bigg ) \sim
1 - (1-\xi^2) e^{-k^2/2N},
$$
where $k:=N - 4l$. We therefore have
\begin{equation}
   \log Z_N(\xi) \sim \sum_{l=0}^{N/2-1} \log \Big (
1- (1-\xi^2) e^{-k^2/2N} \Big )  
\end{equation}
and (\ref{px2}) follows by recognising this as a Riemann sum.
\end{proof}

Use of (\ref{px2}) in (\ref{2.23a}) allows for the computation of the leading order asymptotics in the regime $M = {\rm O}(N^{1/2})$.

\begin{prop}
Let $M = xN^{1/2}$, $x > 0$ and chosen so that $M$ is an integer. Define
\begin{equation}\label{c1}
\chi(\mu) = {1 \over \sqrt{2}} 
\int_0^\infty \log \Big (
1- (1-e^{-2 \mu}) e^{-  t^2} \Big ) \, dt.
\end{equation}
We have
\begin{equation}\label{c2}
p_{N,M}^{\rm r} \Big |_{M = xN^{1/2}} \sim
\exp \Big ( \sqrt{N} \mathop{\min}\limits_{\mu \in \mathbb R} ( x \mu + \chi(\mu) ) \Big ).
\end{equation}
\end{prop}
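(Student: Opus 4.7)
The plan is to apply Laplace's method to the contour integral (\ref{2.23a}) using the asymptotic form of $\log Z_N$ supplied by Proposition \ref{P2}. First I would parametrise the contour as the circle $\xi = e^{-\mu^*+i\theta}$, $\theta\in[-\pi,\pi]$, where $\mu^*\in\mathbb R$ is a radius parameter to be chosen. Setting $\tilde\mu := \mu^*-i\theta$ so that $\xi^2 = e^{-2\tilde\mu}$, Proposition \ref{P2} reads $\log Z_N(\xi)\sim \sqrt{N}\,\chi(\tilde\mu)$ with $\chi$ as in (\ref{c1}). Substituting $M = xN^{1/2}$ and $d\xi = i\xi\,d\theta$ into (\ref{2.23a}) and collecting the exponents yields
\begin{equation*}
p_{N,M}^{\rm r} \sim \frac{1}{2\pi}\int_{-\pi}^{\pi}\exp\bigl(\sqrt{N}\,f(\mu^*-i\theta)\bigr)\,d\theta,\qquad f(\mu) := x\mu+\chi(\mu).
\end{equation*}

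The second step is to choose $\mu^*$ as the real critical point of $f$, i.e.\ the unique solution of $\chi'(\mu^*)=-x$. Differentiating (\ref{c1}) under the integral sign (the integrand of $\chi'$ is $-\sqrt{2}\,e^{-2\mu}e^{-t^2}/[1-(1-e^{-2\mu})e^{-t^2}]$, whose denominator is positive on $\mathbb R$) shows that $\chi'(\mu)<0$ and $\chi''(\mu)>0$ for all $\mu\in\mathbb R$, with $\chi'(\mu)\to 0$ as $\mu\to+\infty$ and $\chi'(\mu)\to-\infty$ as $\mu\to-\infty$; hence $\mu^*$ exists, is unique, and is the global minimiser of $f$. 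Taylor expanding $f(\mu^*-i\theta) = f(\mu^*) - \tfrac{1}{2}\theta^2 f''(\mu^*) + \mathrm{O}(\theta^3)$ and applying the Gaussian integration estimate produces
\begin{equation*}
p_{N,M}^{\rm r} \sim \frac{\exp\bigl(\sqrt{N}\,f(\mu^*)\bigr)}{\sqrt{2\pi\sqrt{N}\,f''(\mu^*)}},
\end{equation*}
whose leading exponential factor is exactly (\ref{c2}); the algebraic prefactor is absorbed into the meaning of $\sim$ at the level of leading exponential order in $\sqrt{N}$.

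The principal obstacle is technical rather than structural. Proposition \ref{P2} is established pointwise via a Riemann sum together with a local CLT for binomial coefficients valid only when $|N-4l|=\mathrm{o}(N^{2/3})$, whereas the saddle point argument requires the asymptotic $\log Z_N(\xi)\sim\sqrt{N}\,\chi(\tilde\mu)$ to hold uniformly for $\tilde\mu$ on the vertical segment $\{\mu^*-i\theta:\theta\in[-\pi,\pi]\}$, together with quantitative control of the integrand for $|\theta|$ outside a window of width $\mathrm{O}(N^{-1/4})$ about zero. The uniformity amounts to extending the proof of Proposition \ref{P2} into a strip of bounded real part in the complex $\tilde\mu$-plane, which in turn requires uniform versions of the local CLT estimate and of the Riemann sum approximation. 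The tail bound reduces to the inequality $\mathrm{Re}\,\chi(\mu^*-i\theta)\le\chi(\mu^*)$ for $\theta$ bounded away from $0$, which follows from the observation that $|1-(1-e^{-2\tilde\mu})e^{-t^2}|$ is minimised at $\theta=0$ for each fixed $t>0$ and real $\mu^*$. Granting these inputs, the saddle point computation above delivers (\ref{c2}).
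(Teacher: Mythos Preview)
Your proposal is correct and follows essentially the same approach as the paper: the change of variables $\xi=e^{-\mu}$ in the contour integral (\ref{2.23a}), substitution of the asymptotic (\ref{px2}) to produce the exponent $\sqrt{N}(x\mu+\chi(\mu))$, and a steepest descent evaluation through the real critical point of $x\mu+\chi(\mu)$. You are in fact more explicit than the paper about the existence and uniqueness of $\mu^*$ (via the monotonicity and convexity of $\chi$) and about the technical caveat that Proposition~\ref{P2} is only established pointwise, whereas the saddle point argument tacitly requires uniformity along the contour.
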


\begin{proof}
As indicated at the end of the previous subsection, our strategy is that first introduced in \cite{L+19}
in the context of eigenvalue fluctuations in a disk for the Ginibre ensemble.
We begin by changing variables $\xi = e^{-\mu}$ in the contour integral of (\ref{2.23a}) to obtain the equivalent integral form
\begin{equation}\label{2.23b}
p_{N,M}^{\rm r} = \int_I e^{\mu M} Z_N(e^{-\mu}) \, d\mu,
\end{equation}
where $I$ is a simple contour in the complex plane starting at $\mu_0$ and finishing at $\mu_1$, with
Im$\, \mu_0 = - \pi$, Im$\, \mu_1 =  \pi$ and
Re$\,(\mu_0) = {\rm Re} \, (\mu_1)$. 
Denote $1/\sqrt{2}$ times the integral in (\ref{px2}) with the substitution $\xi = e^{-\mu}$ as $\chi(\mu)$ as in (\ref{c1}).
Substituting in (\ref{2.23b}) then gives that for large $N$
\begin{equation}\label{2.23c}
p_{N,M}^{\rm r} \Big |_{M = xN^{1/2}}
\sim \int_I e^{ \sqrt{N} (x \mu + \chi(\mu))} \, d\mu.
\end{equation}

Straightforward considerations show that there is a unique stationary point of $\mu + \chi(\mu)$ as appearing in the exponent (\ref{2.23c}) on the real axis. One can check that too that this corresponds to a local minimum, and that the function value is then negative. We choose $\mu_0$ and $\mu_1$ as introduced below (\ref{2.23c}) so that $I$ passes through the stationary point.
Moreover the fact that it is a local minimum is compatible with choosing $I$ as parallel to the negative real axis, which then corresponds to passing through the stationary point in the direction of steepest descent. The leading order asymptotic form (\ref{c2}) follows.
\end{proof}    

The extremities of the intermediate deviation result (\ref{c2}) are $x \to \infty$ and $x \to 0$. These limits relate to the large deviation and central limit regimes respectively. By following the working of \cite[Appendix B]{L+19} carried out in relation to the distribution of the counting function for the number of eigenvalues inside a disked shape region in the Ginibre ensemble, it can be shown that there is a precise matching with the tails in each case.

Consider first the limiting case $x \to \infty$. Here we expect this to relate to the form of $\chi(\mu)$ for $\mu \to - \infty$. In the latter limit it follows from (\ref{c1}) that
$$
\chi(\mu) \mathop{\sim}\limits_{\mu \to -\infty} {1 \over \sqrt{2}}
\int_0^{\sqrt{2|\mu|}}
(2 |\mu| - t^2) \, dt = {4 \over 3}|\mu|^{3/2}.
$$
Substituting this for $\chi(\mu)$ in the exponent of (\ref{c2}) gives that the minimum occurs for $\mu^* = - 2 x^2$. Now evaluating this minimum shows that for $x$ large the probability has the leading order form 
\begin{equation}\label{2.33c}
p_{N,M}^{\rm r} |_{M=xN^{1/2}} \sim e^{-\sqrt{N} x^3/12}. 
\end{equation}
On the other hand, in relation to the large deviation regime, $M = \alpha N$ and thus $x = \alpha N^{1/2}$. This allows the RHS of (\ref{2.33c}) to be identified as $e^{- \alpha^3 N^2/12}$, which is precisely the large deviation result (\ref{(17x)}) with the substitution on the RHS given by minus the leading order small $\alpha$ form of
(\ref{15ab}).

We now turn our attention to the circumstance that $x \to 0$, which we identify as relating to the form of $\chi(\mu)$  for $\mu \to 0$. For the latter, we see from (\ref{c1}) that
$$
\chi(\mu) \mathop{\sim}\limits_{\mu \to 0^+} \sqrt{\pi \over 2} \bigg ( - \mu + \mu^2 \Big ( 1 - {1 \over \sqrt{2} } \Big ) \bigg ).
$$
Substituting this for $\chi(\mu)$ in the minimisation requirement of (\ref{c2}) gives that the optimal value of $\mu$ is linearly related to $x$. Now evaluating the minimised exponent 
we read off the small
$x$ form of the large $N$ result  (\ref{c2})
$$
p_{N,M}^{\rm r} |_{M=xN^{1/2}} \sim e^{-\sqrt{N} (x -
\sqrt{\pi/2})^2/c}
$$
where $c$ is as in
(\ref{14a}). Now using the relation between $x$ and $\alpha$ as noted below
(\ref{2.33c}) we see that this asymptotic form is precisely the exponential factor of the local CLT
result (\ref{14a}).

\begin{remark} ${}$ \\
  1.~For the intermediate regime in the case of eigenvalue fluctuation in a disk for the Ginibre ensemble, an aspect of the intermediate regime considered in \cite{L+19}  (see also
  \cite{LMS19} and 
  \cite{ABES23}), distinct from the leading order form of the corresponding probability density, was the computation of an analytic expression for the limiting cumulants. \\
  2.~Subsequent to 
  the study of the intermediate regime in the case of eigenvalue fluctuation in a disk for the Ginibre ensemble in \cite{L+19}, it was shown in \cite{FL22} that the asymptotic formula (\ref{c2}) can be exended beyond leading order using the method of mod-phi convergence \cite{FMN16}.
\end{remark}

\subsection{Probability of no real eigenvalues}

From the definitions 
\begin{equation}\label{(23a)}
p_{N,0}^{\rm r} = Z_N(\xi) |_{\xi = 0}.
\end{equation}
Use of Proposition \ref{P2} allows for the leading large $N$ form of $p_{N,0}^{\rm r}$ to be computed.

\begin{prop}
Let $\mu_N$ be as in (\ref{M}). We have
\begin{equation}\label{(22a)}
p_{N,0}^{\rm r} \mathop{\sim}\limits_{N \to \infty} e^{-\sqrt{\pi N/8} \zeta(3/2 ) }
 \mathop{\sim}\limits_{N \to \infty} 
e^{-\mu_N \zeta(3/2)/2} .
\end{equation}
\end{prop}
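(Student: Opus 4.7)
The plan is to combine Proposition \ref{P2} with the identity $p_{N,0}^{\rm r}=Z_N(0)$ from (\ref{(23a)}), and then evaluate the resulting integral in closed form. Since Proposition \ref{P2} is stated for fixed $\xi$, and $\xi=0$ is a fixed value, setting $\xi=0$ in (\ref{px2}) gives
\begin{equation*}
\log p_{N,0}^{\rm r} = \log Z_N(0) \sim \sqrt{N/2} \int_0^\infty \log\bigl(1 - e^{-t^2}\bigr)\, dt.
\end{equation*}

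The next step is to evaluate $I:=\int_0^\infty \log(1-e^{-t^2})\,dt$ in closed form. Expanding the logarithm as a power series,
\begin{equation*}
\log(1 - e^{-t^2}) = -\sum_{k=1}^\infty \frac{e^{-kt^2}}{k},
\end{equation*}
and interchanging sum and integral (justified since all terms are negative and the series converges monotonically), one obtains
\begin{equation*}
I = -\sum_{k=1}^\infty \frac{1}{k} \int_0^\infty e^{-kt^2}\, dt = -\sum_{k=1}^\infty \frac{\sqrt{\pi}}{2 k^{3/2}} = -\frac{\sqrt{\pi}}{2}\,\zeta(3/2).
\end{equation*}
Substituting back yields $\log p_{N,0}^{\rm r} \sim -\sqrt{N/2}\cdot\frac{\sqrt{\pi}}{2}\zeta(3/2) = -\sqrt{\pi N/8}\,\zeta(3/2)$, which is the first form in (\ref{(22a)}). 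The second form follows immediately from (\ref{M}), since $\mu_N/2 \sim \frac{1}{2}\sqrt{\pi N/2} = \sqrt{\pi N/8}$.

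The potential obstacle, which I expect to be minor, is verifying that applying Proposition \ref{P2} at the boundary value $\xi = 0$ is legitimate. In the proof of that proposition the local CLT for binomial coefficients is used in the range $(N-4l)={\rm o}(N^{2/3})$, and one might worry about the tails $l$ near $0$ or $l$ near $N/2-1$, where $|k|=|N-4l|$ is of order $N$. However, in this regime $(1-\xi^2)\,e^{-k^2/(2N)}$ is exponentially small uniformly in $\xi \in [0,1]$, so that $\log(1-(1-\xi^2)e^{-k^2/(2N)})$ contributes negligibly to the sum, and the Riemann-sum identification remains valid at $\xi=0$. With this verified, the three steps above complete the proof.
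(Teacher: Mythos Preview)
Your proposal is correct and follows essentially the same route as the paper: set $\xi=0$ in Proposition~\ref{P2} via (\ref{(23a)}) to obtain (\ref{(24a)}), then evaluate the integral. The paper simply asserts ``Evaluating the integral, (\ref{(22a)}) follows'' without detail, whereas you supply the explicit series expansion yielding $-\tfrac{\sqrt{\pi}}{2}\zeta(3/2)$ and also add a justification for applying Proposition~\ref{P2} at the endpoint $\xi=0$ that the paper does not discuss.
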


\begin{proof}
Substituting (\ref{px2}) for the right hand side gives
\begin{equation}\label{(24a)}
\log p_{N,0}^{\rm r} \sim { \sqrt{N \over 2}}
\int_0^\infty \log \Big (
1-  e^{-  t^2} \Big ) \, dt.
\end{equation}
Evaluating the integral, (\ref{(22a)}) follows.
\end{proof}

As a function of $\xi$, we note that (\ref{px2}) is not analytic about $\xi=0$, and so cannot be used to read off
$\log p_{N,M}^{\rm r}$
for $M$ fixed. 
Instead we can return to (\ref{3a}) to reach the conclusion that to leading order the asymptotic form is independent of $M$ and so is given by either of the asymptotic expressions in (\ref{(22a)}).

One notes in (\ref{(22a)}) the appearance of the particular constant $\zeta(3/2)$. In the study of the real eigenvalues of GinOE, this has also appeared in the asymptotic form of the probability of no eigenvalues in an interval of length $s$ in the bulk (replace $N$ by $s$ in the first asymptotic expression of 
(\ref{(22a)}))
\cite{Fo15,FTZ22}.
Most strikingly in the present context is the result from \cite{K+16}
establishing the leading asymptotic form for $p_{N,0}^{\rm r}$ in the case of GinOE is identical to that of the second asymptotic formula in 
(\ref{(22a)}), where the expected number of real eigenvalues is also with respect to GinOE (its leading asymptotics has been noted in the first paragraph of the Introduction), which moreover was extended to the case of the scaled, weakly non-Hermitian version of elliptic GinOE \cite[Th.~1.2 and Remark 1.4]{BMS25}. In fact one reads at the conclusion of \cite[\S 1.2]{BMS25} speculation that this second asymptotic form in (\ref{(22a)}) may indeed be universal.

\subsection*{Acknowledgements}
This work has been supported by the Australian Research Council Discovery Project
DP250102552. The helpful feedback on the paper by Sung-Soo Byun is appreciated.

\providecommand{\bysame}{\leavevmode\hbox to3em{\hrulefill}\thinspace}
\providecommand{\MR}{\relax\ifhmode\unskip\space\fi MR }
% \MRhref is called by the amsart/book/proc definition of \MR.
\providecommand{\MRhref}[2]{%
  \href{http://www.ams.org/mathscinet-getitem?mr=#1}{#2}
}
\providecommand{\href}[2]{#2}

\end{document}